\theoremstyle{definition}
\newtheorem{lemma}{Lemma}[section]
\newtheorem{proposition}{Proposition}
\title{Screening of Informed and Uninformed Experts}
\author[1, 2]{Jorge Barreras\footnote{E-mail: fbarrer@sas.upenn.edu This paper is based on my dissertation submitted in fulfillment of the requirements for the degree of Master of Science in the Department of
Economics at the Universidad de los Andes.}}
\author[2,3]{Alvaro Riascos}
\affil[1]{Department of Mathematics, University of Pennsylvania}
\affil[2]{Quantil Research, Bogota, Colombia}
\affil[3]{Department of Economics, Universidad de Los Andes}
\date{November 7 2016}
\begin{document}

\maketitle

\begin{abstract}
\noindent Testing the validity of claims made by self-proclaimed experts can be impossible when testing them in isolation, even with infinite observations at the
disposal of the tester. However, in a multiple expert setting it is possible to
design a contract that only informed experts accept and uninformed experts
reject. The tester can pit competing forecasts of future events against each
other and take advantage of the uncertainty experts have about the other experts' knowledge. This contract will work even when there is only a single
data point to evaluate. \\
\textbf{Keywords:} \textit{Scoring rules, strategic experts , probabilistic forecasting}
\end{abstract}

\section{Introduction}
\noindent This paper studies the relationship between two self-proclaimed experts who deliver forecasts of future events to a principal, called Alice. Alice needs a mechanism to induce informed experts to reveal their knowledge honestly and to screen uninformed experts that would deliver useless and potentially harmful forecasts.

In recent literature, there has been an active debate around the general problem of whether a decision maker can successfully evaluate the forecasts of self-proclaimed experts and screen them from uninformed ones. The seminal paper was one by \cite{foster1998asymptotic} where they
show that a calibration test can be passed by uninformed experts. In the setting of evaluating an expert in isolation the matter is pretty much settled with the elegant impossibility result by Olszewski and Sandroni \cite{olszewski2009strategic} showing that any test that does not rely on counterfactuals and can be passed by informed experts, can also be passed by an uninformed one. \footnote{ In \cite{olszewski2009strategic} a similar proof is given to show that, in a single expert setting, there are no contracts that only informed experts would accept} Dekel and Feinberg \cite{dekel2006non} show a test based on eventual counterfactuals that only informed experts can pass.

Such results motivate evaluating multiple experts at once, since the possibility of one forecast performing better than another in some metric opens new possibilities for screening. In such setting there have been mixed results regarding special cases; Al-Najjar and Weinstein \cite{al2008comparative} show that under the assumption that an informed expert is present, a comparative test of their forecasts can pick the true expert with high probability. Feinberg and Stewart \cite{feinberg2008testing} provide a test that, when restricted to certain types of forecasts (that could possibly contain the true odds), cannot be passed by uninformed experts. On the other hand, Olszewski and Sandroni \cite{olszewski2009manipulability} extend their previous results to show that two uninformed experts can independently pass a comparative test, even with infinitely many data points at the disposal of the tester.

Another approach to the problem deals with considering the incentives of potential experts to misrepresent their beliefs or honestly reveal their knowledge. Echenique and Shmaya \cite{echenique2007you} introduce the idea that a false expert might do `no harm’ if his false information does not worsen the outcome compared to a prior belief held by the tester. This idea is explored further by Sandroni \cite{sandroni2014least} to propose contracts, based on scoring rules, that incentivize informed experts to reveal the truth and uninformed experts to `do no harm’. Important progress was also made by Babaioff, Lambert et al.\cite{babaioff2011only} by using scoring rules that screened uninformed experts under certain non-convexity assumptions.

I extend Sandroni \cite{sandroni2014least} in the following way; Alice offers a contract to a set of experts that determines money transfers based on their forecasted odds of a future state of Nature and the actual observation of such state. This contract specifies transfers according to how high each forecast scores on a \textit{Brier score} \cite{brier1950verification} compared to the rival forecasts. Such contract can be designed so that it is accepted by informed experts and gives incentives to revealing the true odds, but it is rejected by uncertainty averse uninformed experts. The same contract is used in a general setting, where there is not a perfectly informed expert but rather partially informed ones, to screen for the better informed expert.

The result in this paper can be understood as the conclusion to the debate regarding the possibility of screening informed and uninformed experts. In a more general setting, when it is possible to test multiple experts at once, there is a simple contract that only informed experts would accept. Moreover, this result holds even when evaluating on a single data point.\footnote{Results are presented for two experts, but can trivially be extended to an arbitrary number of experts.}

\section{The Model}
Let $S$ be a finite set of states. Let $\Delta(S)$ be the set of probability distributions over $S$. Two experts, referred to as expert $1$ and expert $2$, deliver probabilistic forecasts $f_{1}$ and $f_{2} \in \Delta(S)$ to a tester named Alice. 

Alice creates a contract that specifies money transfers between her and each expert to elicit information. A contract is a payoff function $C: \Delta(S) \times \Delta(S) \times S \rightarrow \mathbb{R}$ whose value depends on the announced odds and the observed state. If any expert rejects the contract his payoff
is 0.\footnote{If only one expert remains then the tester is sure he’s informed. He can receive a second contract that
incentivizes honestly revealing his knowledge, like in \cite{sandroni2014least}.} Consider the behavior of expert $1,$ who is offered a contract $C_{1}$. In the case that both experts accept their respective contracts, they deliver odds $f_{1}$ and $f_{2}$. When state $s$ is observed, expert 1 receives (or gives) payoff $C_{1}\left(f_{1}, f_{2}, s\right)$.

If informed, expert 1 maximizes his expected utility conditional on the other expert's forecast. It is said that Expert 1 accepts $a$ contract if for every $f_{1}, f_{2} \in \Delta(S)$ the contract satisfies $E^{f_{1}}\left\{C_{1}\left(f_{1}, f_{2}, \cdot\right)\right\}>0,$ where $E^{f_{1}}$ is the expected value under odds $f_{1}$. That is, when revealing the truth gives him a positive payoff regardless of the other expert's forecast. Moreover, he honestly reveals his beliefs when for all $f_{2} \in \Delta(S)$ and $f^{\prime} \neq f_{1} \in \Delta(S)$
\begin{align*}
    E^{f_{1}}\left\{C_{1}\left(f_{1}, f_{2}, \cdot\right)\right\}>E^{f_{1}}\left\{C_{1}\left(f^{\prime}, f_{2}, \cdot\right)\right\},
\end{align*}

\noindent which is a property that proper scoring rules will guarantee and it ensures that informed experts will not misrepresent their beliefs. If uninformed, he evaluates his prospects using the minmax criteria as in Gilboa and Schmeidler \cite{gilboa2004maxmin}.\footnote{The uninformed expert is extremely averse to uncertainty and does not have a prior, instead, she will only accept a contract if she gets a positive payoff in her worst case scenario.} Considering both experts may announce their odds using random generators of forecasts $\xi_1 and \xi_2 \in \Delta(\Delta(S))$ (a mixed strategy) this can be stated as saying that he only accepts contract $C_1$ when there exists a random generator of forecasts $\xi_1$ such that
\begin{align*}
    \min _{f \in \Theta_{1} \atop \xi_{2} \in \Delta \Delta(S) } \int_{\Delta(S)} \int_{\Delta(S)} E^{f} C^{1}\left(f^{\prime}, f^{*}, \cdot,\right) \mathrm{d}\xi_{2}\left(f^{*}\right) \mathrm{d}\xi_{1}\left(f^{\prime}\right)>0,
\end{align*}
where $\Theta_{1}$ is a closed subset of $\Delta(S)$ that contains the forecasts expert 1 deems plausible.

\section{Main Result}
\begin{proposition} \label{prop1}
Assume that $\Theta_{1}$ contains at least two points. There exists a contract $C_{1}$ such that expert 1, if informed, will accept and will incentivize her to reveal her knowledge and, if uninformed, will reject.
\end{proposition}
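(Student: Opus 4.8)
The plan is to use a \emph{comparative Brier contract}. Writing the (negative) Brier score as $B(f,s) = -\sum_{t\in S}\big(f(t)-\mathbbm{1}\{t=s\}\big)^2$, I would set
\begin{align*}
C_1(f_1,f_2,s) = \alpha + \beta\big(B(f_1,s) - B(f_2,s)\big),
\end{align*}
with constants $\alpha,\beta>0$ to be calibrated at the very end. The single computation that drives everything is the identity $E^{p}\{B(g,\cdot)\} = \|p\|^2 - 1 - \|g-p\|^2$, obtained by expanding the square; in particular $E^p\{B(g,\cdot)\}$ is uniquely maximized at $g=p$, i.e. the Brier score is strictly proper.

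First I would dispatch the informed expert. Using the identity, if $f_1$ are the true odds and she reports $f'$ while the rival reports $f_2$, then
\begin{align*}
E^{f_1}\{C_1(f',f_2,\cdot)\} = \alpha + \beta\big(\|f_2-f_1\|^2 - \|f'-f_1\|^2\big).
\end{align*}
The rival's term $\|f_2-f_1\|^2$ is independent of $f'$, so this is strictly decreasing in $\|f'-f_1\|^2$ and is maximized at $f'=f_1$, giving strict truthful revelation for every $f_2$. At the truthful report the value is $\alpha + \beta\|f_2-f_1\|^2 \ge \alpha > 0$ for all $f_1,f_2$, so the informed expert accepts. Both informed requirements hold for any $\alpha,\beta>0$.

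The substantive work is the uninformed expert. With the same identity, for a true distribution $f$ and arbitrary reports $f',f^*$ I get $E^{f}\{C_1(f',f^*,\cdot)\} = \alpha + \beta\big(\|f^*-f\|^2 - \|f'-f\|^2\big)$, so after integrating against the mixed strategies the objective becomes $\alpha + \beta\big(V(\xi_2;f) - V(\xi_1;f)\big)$ with $V(\xi;f) := \int \|g-f\|^2\,d\xi(g)$. The adversary minimizing over $\xi_2$ takes the point mass $\xi_2=\delta_f$ (the rival mimics the truth), driving $V(\xi_2;f)$ to its floor $0$ --- exactly the perfectly-informed rival the uncertainty-averse expert dreads. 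The minmax value thus collapses to
\begin{align*}
\alpha - \beta\,\min_{\xi_1}\ \max_{f\in\Theta_1} V(\xi_1;f).
\end{align*}
For the inner problem I would use the bias--variance decomposition $V(\xi_1;f) = \mathrm{Var}(\xi_1) + \|\mu-f\|^2$, where $\mu$ is the barycenter of $\xi_1$, which shows randomizing only adds the nonnegative term $\mathrm{Var}(\xi_1)$; hence the optimal $\xi_1$ is a deterministic forecast and
\begin{align*}
R := \min_{\xi_1}\ \max_{f\in\Theta_1} V(\xi_1;f) = \min_{\mu\in\Delta(S)}\ \max_{f\in\Theta_1}\|\mu-f\|^2,
\end{align*}
the squared radius of the smallest enclosing ball (squared Chebyshev radius) of $\Theta_1$.

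The final step, and the crux, is that $R>0$ precisely when $\Theta_1$ is nondegenerate. If $\Theta_1$ contains two distinct points $f_a\ne f_b$, every enclosing ball must cover both, so its radius is at least $\tfrac12\|f_a-f_b\|$, giving $R \ge \tfrac14\|f_a-f_b\|^2 > 0$; compactness of $\Delta(\Delta(S))$ in the weak-$*$ topology together with lower semicontinuity of the convex map $\xi_1\mapsto\max_f V(\xi_1;f)$ guarantees the minimum is attained. Picking any $\beta>0$ and then $0<\alpha<\beta R$ forces the minmax value $\alpha-\beta R<0$, so no $\xi_1$ yields a strictly positive worst case and the uninformed expert rejects, while the informed conclusions are untouched. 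I expect the main obstacle to be making the two uninformed-side reductions fully rigorous --- justifying that the adversary's optimal $\xi_2$ is the truth-mimicking point mass, and that expert $1$'s optimal $\xi_1$ is deterministic --- since that is where the min/max over $\Delta(\Delta(S))$ must be controlled; the positivity of the Chebyshev radius is then the clean geometric punchline that consumes the two-point hypothesis.
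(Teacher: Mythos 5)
Your proposal is correct and follows essentially the same route as the paper: the comparative Brier contract $B(f_1,s)-B(f_2,s)+\text{const}$, the expected-score identity, the reduction of the uninformed expert's worst case to a truth-mimicking rival, Jensen (your bias--variance decomposition) to collapse $\xi_1$ to its barycenter, and a two-point lower bound on the worst-case distance. Your calibration is in fact slightly more careful than the paper's: your Chebyshev-radius bound $R\ge\tfrac14\|f_a-f_b\|^2$ with $\alpha<\beta R$ gives a strictly negative worst case, whereas the paper's choice $\varepsilon=\tfrac12\|f_x-f_y\|_2^2$ combined with its final ``triangular inequality'' step only yields $\le 0$ (and that step as written needs the factor $\tfrac14$ rather than $\tfrac12$).
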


The intuition of the proof is simple. Design a contract that gives a payoff proportional to the difference of the Brier score plus a small enough $\varepsilon$. Informed experts can be assured to
get paid at least $\varepsilon$ since the Brier score is maximized with the true odds. Uninformed experts get negative payments in the worst case scenario, which is when the other expert is informed and forecasts the truth with probability 1, because they cannot produce a randomized strategy that is always close to the truth. Because the value of $\varepsilon$ depends on the set $\Theta_1$, this solution does not give a single contract that screens every possible pair of uninformed experts, but rather, for each pair of experts a contract that screens informed from uninformed experts can be designed.\footnote{To design this contract, the tester can just query the experts for two outcomes that they deem plausible.}

This contract can be used in a more general setting involving partially informed experts. An expert is partially informed if he is uninformed and his set of plausible forecasts $\Theta$ takes the form $B_{\delta}\left(f^{*}\right)=\left\{f \in \Delta(S):\left\|f-f^{*}\right\| \leq \delta\right\}$ for $f^{*} \in \Delta(S)$.\footnote{A reasonable assumption is that the true odds be in $B_{\varepsilon}\left(f^{*}\right)$ but this is not necessary for the result.} An expert is said to be better
informed than another when their sets of plausible forecasts are $B_{\varepsilon_{1}\left(f_{1}\right)},$ respectively $B_{\varepsilon_{2}}\left(f_{2}\right),$
and $\varepsilon_{1}<\varepsilon_{2}$ for some pair of forecasts $f_{1}$ and $f_{2}$.

\begin{proposition} \label{prop2}
In a setting with two partially informed experts, there exists a contract such that only the better informed expert would accept.
\end{proposition}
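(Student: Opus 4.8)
The plan is to reuse the contract from Proposition~\ref{prop1} and to show that its worst-case value is strictly decreasing in the radius of an expert's plausible set. Writing the Brier score as $B(f,s)=-\sum_{s'\in S}\left(f(s')-\mathbbm{1}[s'=s]\right)^2$, I would give expert $i$ a contract
\begin{align*}
C_i(f',f^*,s)=\alpha\left[B(f',s)-B(f^*,s)\right]+\varepsilon ,
\end{align*}
where $f'$ is his own announced forecast, $f^*$ the rival's, and $\alpha>0$, $\varepsilon>0$ are constants to be fixed at the end. The single fact I rely on is the standard identity for the expected quadratic score, namely $E^{p}\left\{B(g,\cdot)\right\}=E^{p}\left\{B(p,\cdot)\right\}-\left\|p-g\right\|^2$, so that misreporting costs exactly the squared Euclidean distance to the truth $p$.

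First I would substitute this identity into the min-max acceptance criterion for a partially informed expert with plausible set $\Theta_i=B_{\varepsilon_i}(f_i)$. The expected-value objective collapses, because the $E^{p}\left\{B(p,\cdot)\right\}$ terms cancel, to
\begin{align*}
\int_{\Delta(S)}\!\int_{\Delta(S)} E^{p}\left\{C_i(f',f^*,\cdot)\right\}\,\mathrm{d}\xi_j(f^*)\,\mathrm{d}\xi_i(f')=\alpha\left[-\int\left\|p-f'\right\|^2\mathrm{d}\xi_i(f')+\int\left\|p-f^*\right\|^2\mathrm{d}\xi_j(f^*)\right]+\varepsilon .
\end{align*}
The adversarial minimization runs over the true state $p\in\Theta_i$ and the rival's strategy $\xi_j\in\Delta\Delta(S)$; since $\xi_j$ enters with a positive sign, the worst case puts $\xi_j$ equal to the Dirac mass at $p$ (the rival forecasts the truth), which annihilates the second term. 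The expert is then left to maximize $\varepsilon-\alpha\max_{p\in\Theta_i}\int\left\|p-f'\right\|^2\mathrm{d}\xi_i(f')$ over his own $\xi_i$. A bias–variance decomposition, $\int\left\|p-f'\right\|^2\mathrm{d}\xi_i=\left\|p-\bar f\right\|^2+\mathrm{Var}(\xi_i)$ with $\bar f=\int f'\,\mathrm{d}\xi_i$, shows that randomizing only adds a nonnegative variance term, so a degenerate forecast is optimal, and the best degenerate forecast is the Chebyshev center of $\Theta_i$. The worst-case value therefore equals $\varepsilon-\alpha\rho_i^2$, where $\rho_i$ is the Chebyshev radius of $\Theta_i$.

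To finish, note that for a ball $B_{\varepsilon_i}(f_i)$ the Chebyshev center is $f_i$ and the Chebyshev radius is $\rho_i=\varepsilon_i$, so the better informed expert ($\varepsilon_1<\varepsilon_2$) has the strictly larger worst-case value. Choosing any $\varepsilon$ with $\alpha\varepsilon_1^2<\varepsilon<\alpha\varepsilon_2^2$ --- an interval that is nonempty exactly because $\varepsilon_1<\varepsilon_2$ --- makes expert $1$'s value $\varepsilon-\alpha\varepsilon_1^2>0$ and expert $2$'s value $\varepsilon-\alpha\varepsilon_2^2<0$, so only the better informed expert accepts. I expect the main obstacle to be the reduction carried out in the second step: one has to argue rigorously that the rival's adversarial best response genuinely attains the truth inside $\Delta\Delta(S)$, that degenerate own-forecasts dominate mixed ones against the worst-case truth, and --- most delicately --- that when the ball $B_{\varepsilon_i}(f_i)$ is truncated by the boundary of $\Delta(S)$ one still has $\rho_i\le\varepsilon_i$ with the ordering $\rho_1<\rho_2$ preserved (which holds, for instance, when the balls share a center or lie in the interior of the simplex).
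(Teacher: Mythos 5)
Your proposal is correct and shares its skeleton with the paper's proof: the same contract (a Brier-score difference plus a constant wedged strictly between $\varepsilon_1^2$ and $\varepsilon_2^2$; your $\alpha$ is a harmless rescaling), the same identification of the worst case as the rival reporting the true odds, and the same observation that randomizing one's own forecast only hurts (the paper gets this from Jensen's inequality, you from the bias--variance identity, which is the same fact stated as an equality). Where you genuinely depart is in the last step. The paper lower-bounds $\max_{f\in\Theta_2}\|f-\bar f\|_2^2$ by choosing two points $f_x,f_y\in\Theta_2$ with $\|f_x-f_y\|_2^2/2\ge\varepsilon_2^2$ and invoking the triangle inequality, i.e.\ a two-point/diameter bound; you instead compute the maxmin value of the contract exactly as $\varepsilon-\alpha\rho_i^2$, with $\rho_i$ the Chebyshev radius of $\Theta_i$. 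This buys three things: acceptance and rejection become two instances of one formula; the argument extends verbatim to arbitrary compact plausible sets separated by their Chebyshev radii, not just balls; and it sidesteps a factor-of-two slip in the paper's chain (the triangle inequality only gives $\tfrac{\|f_x-\bar f\|_2^2+\|f_y-\bar f\|_2^2}{2}\ge\tfrac{\|f_x-f_y\|_2^2}{4}$, not $\tfrac{\|f_x-f_y\|_2^2}{2}$, though the conclusion survives for antipodal points of a ball). The obstacle you flag at the end --- that $B_{\varepsilon_2}(f_2)$ truncated by the boundary of $\Delta(S)$ may have $\rho_2<\varepsilon_2$, conceivably even $\rho_2\le\rho_1$ --- is a real gap, but it is not yours alone: the paper's proof silently assumes it away when it asserts the existence of $f_x,f_y\in\Theta_2$ with $\|f_x-f_y\|_2\ge\sqrt{2}\,\varepsilon_2$, which fails for heavily truncated balls. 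Your Chebyshev formulation has the virtue of making explicit that the quantity the contract actually screens on is $\rho_i$, so the proposition is cleanly true with ``better informed'' read as ``smaller Chebyshev radius of the plausible set.''
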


The assumption that there are perfectly informed experts is unrealistic. Proposition \ref{prop2} shows
that even when one expert is slightly better informed than another, there is a contract that achieves perfect screening. Proposition \ref{prop1} might be regarded as a limit case of proposition \ref{prop2}.

This contract resembles the test in \cite{al2008comparative} in that it compares forecasts against each other in a way that the forecast of the better informed expert will perform better than the other forecasts. However, unlike the test in \cite{al2008comparative}, we do not make the strong assumption that the tester knows about the presence of one informed expert. For our contract to work, It is enough to assume that the experts are uncertain about each other’s type and strategy. This assumption cannot be dispensed entirely since, under this contract, uninformed experts with identical forecasts can always secure a positive payment.

Another noteworthy difference is that the test in \cite{al2008comparative} cannot guarantee that the informed expert will be picked. In contrast, the contract in this paper simply has no incentive for uninformed experts to accept it, thus, this mechanism achieves perfect screening.\footnote{
The comparative test in \cite{al2008comparative} cannot fail an uninformed expert that, by chance, produces a forecast close to the truth.} 

\section{Conclusion}
Screening informed and uninformed experts can be difficult when evaluating a single expert, however, the presence of multiple experts brings strategical uncertainty to uninformed experts
which can be exploited to design a contract that only informed experts would accept.

\section{Appendix}
\begin{lemma}\label{lemma_brier}
The Brier Score $B: \Delta(S) \times S \rightarrow \mathbb{R},$ defined as $B(f, s)=2 f(s)-\sum_{s^{\prime} \in S}\left(f\left(s^{\prime}\right)\right)^{2}-1$ is such that
\begin{align*}
E^{f}\{B(g, \cdot)\}=\|f\|_{2}^{2}-\|f-g\|_{2}^{2}-1,    
\end{align*}
\noindent where $\|\cdot\|_{2}$ denotes the $\mathcal{L}_{2}\left(\mathbb{R}\right)$ norm.
\end{lemma}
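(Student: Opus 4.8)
The plan is to prove the identity by a direct computation: expand the expectation using linearity, invoke the normalization $\sum_{s} f(s) = 1$ to discard the state-independent terms, and then reconcile what remains with the claimed expression via the standard expansion of a squared $\mathcal{L}_2$ norm.

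First I would write the expectation explicitly as a finite sum over states and substitute the definition of the Brier score:
\begin{align*}
E^{f}\{B(g,\cdot)\} = \sum_{s \in S} f(s)\,B(g,s) = \sum_{s \in S} f(s)\left(2g(s) - \sum_{s' \in S}\bigl(g(s')\bigr)^{2} - 1\right).
\end{align*}
The key observation is that the two trailing terms, $\sum_{s'}(g(s'))^{2} = \|g\|_{2}^{2}$ and the constant $1$, do not depend on the summation index $s$, so they factor out of the sum against $f(s)$. Since $f \in \Delta(S)$ forces $\sum_{s} f(s) = 1$, the expression collapses to
\begin{align*}
E^{f}\{B(g,\cdot)\} = 2\sum_{s \in S} f(s)g(s) - \|g\|_{2}^{2} - 1.
\end{align*}

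Second, I would match this against the right-hand side of the claimed identity by expanding the squared norm. Since $\|f-g\|_{2}^{2} = \|f\|_{2}^{2} - 2\sum_{s} f(s)g(s) + \|g\|_{2}^{2}$, it follows that $\|f\|_{2}^{2} - \|f-g\|_{2}^{2} = 2\sum_{s} f(s)g(s) - \|g\|_{2}^{2}$, and subtracting $1$ yields exactly the expression computed above, completing the proof.

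There is no substantive obstacle in this lemma; it is a bookkeeping calculation. The only point requiring any care is recognizing that the quadratic penalty $\sum_{s'}(g(s'))^{2}$ and the constant in the Brier score are independent of the realized state, so that the normalization $\sum_{s} f(s) = 1$ can be used to eliminate them cleanly — everything else is the polarization of the squared norm.
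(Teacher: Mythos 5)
Your proof is correct and follows essentially the same route as the paper's own: expand the expectation over states, use the normalization $\sum_{s} f(s) = 1$ to pull out the state-independent terms, and then match the result by polarizing the squared norm. As a minor bonus, your intermediate expression $2\sum_{s} f(s)g(s) - \|g\|_{2}^{2} - 1$ carries the correct sign on $\|g\|_{2}^{2}$, whereas the paper's displayed middle step has a sign typo (it shows $+\sum_{s'}(g(s'))^{2}$ where a minus sign is needed); the final identity is unaffected.
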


\begin{proof}
\begin{align*}
E^{f}\{B(g, \cdot)\} &=-\sum_{s \in S} f(s)\left(1-2 g(s)+\sum_{s^{\prime} \in S}\left(g\left(s^{\prime}\right)\right)^{2}\right) \\ &=\sum_{s^{\prime} \in S}\left(g\left(s^{\prime}\right)\right)^{2}+\sum_{s \in S} 2 f(s) g(s)-1 \\ &=\|f\|_{2}^{2}-\|f-g\|_{2}^{2}-1    
\end{align*}
\end{proof}

\noindent \textbf{PROOF OF PROPOSITION \ref{prop1}}
\begin{proof}
Let $B$ be the Brier Score, as previously defined, and let
$f_{x}$ and $f_{y}$ be two different elements of $\Theta_{1}$. Define the contract $C_{1}$ for expert 1 as $C_{1}\left(f_{1}, f_{2}, s\right)= B\left(f_{1}, s\right)-B\left(f_{2}, s\right)+\varepsilon$, where $\varepsilon=\frac{\|f-g\|_{2}^{2}}{2}$.\\

If expert 1 is informed, he accepts the contract because, applying Lemma \ref{lemma_brier}
\begin{align*}
    E^{f}\left\{C_{1}\left(f, f_{2}, \cdot\right)\right\}&=\left(\|f\|_{2}^{2}-\|f-f\|_{2}^{2}-1\right)-\left(\|f\|_{2}^{2}-\left\|f-f_{2}\right\|_{2}^{2}-1\right)+\varepsilon\\
    &=\left\|f-f_{2}\right\|_{2}^{2}+\varepsilon>0    
\end{align*}

Moreover, the informed expert honestly reveals the truth since $\forall f_{1} \neq f$

\begin{align*}
    \quad E^{f}\left\{C_{1}\left(f, f_{2}, \cdot\right)\right\}=\left\|f-f_{2}\right\|_{2}^{2}+\varepsilon>\left\|f-f_{2}\right\|_{2}^{2}-\left\|f-f_{1}\right\|_{2}^{2}+\varepsilon=E^{f}\left\{C_{1}\left(f_{1}, f_{2}, \cdot\right)\right\}
\end{align*}

On the other hand, if expert 1 is uninformed and forecasts using a random generator of forecasts $\xi_{1}$, we begin
by noting that his maxmin payoff is bounded above by the one obtained if the other expert forecasts the truth (which he would if he is informed).\footnote{Here we are using the assumption that an uninformed expert considers the possibility that a competing
expert might forecast the true odds. For example, if he is unsure about whether the other expert is informed
or not. This is a weaker assumption than the one in \cite{al2008comparative} where they assume that the
principal knows that a true expert is present.} Formally

\begin{align}
\min _{f \in \Theta_{1} \atop \xi_{2} \in \Delta \Delta(S)} & \iint\displaylimits_{\scriptscriptstyle \Delta(S) \Delta(S)} E^{f_{1}^{1}\left(f^{\prime}, f^{*},\right) d \xi_{2}\left(f^{*}\right) d \xi_{1}\left(f^{\prime}\right)}=\min _{f \in \Theta_{1} \atop \xi_{2} \in \Delta \Delta(S)} \iint\displaylimits_{\scriptscriptstyle \Delta(S) \Delta(S)} \left(\left\|f-f_{2}\right\|_{2}^{2-}\left\|f-f_{1}\right\|_{2}^{2}+\varepsilon\right) d \xi_{2}\left(f^{*}\right) d \xi_{1}\left(f^{\prime}\right) \nonumber\\
&\leq \min _{f \in \Theta_{1}} \int\displaylimits_{\scriptscriptstyle \Delta(S)} \varepsilon-\left\|f-f_{1}\right\|_{2}^{2} d \xi_{1}\left(f^{\prime}\right)=\min _{f \in \Theta_{1}} \int\displaylimits_{\scriptscriptstyle \Delta(S)} E^{f} C^{1}\left(f^{\prime}, f, \cdot\right) d \xi_{1}\left(f^{\prime}\right) \label{negative}
\end{align}

However, the expression in \eqref{negative} is negative for every value of $\xi_1$ because
\begin{align*}
   \min _{f \in \Theta_{1}} \int\displaylimits_{\Delta(S)} \varepsilon-\left\|f-f_{1}\right\|_{2}^{2} d \xi_{1}\left(f^{\prime}\right) &=\varepsilon-\max _{f \in \Theta_{1}} \int\displaylimits_{\Delta(S)}\left\|f-f_{1}\right\|_{2}^{2} d \xi_{1}\left(f^{\prime}\right) \\ & \leq \varepsilon-\max _{f \in \Theta_{1}}\left\|\hspace{3pt}\int\displaylimits_{\Delta(S)} f-f^{\prime} d \xi_{1}\left(f^{\prime}\right)\right\|_{2}^{2} \\ &=\varepsilon-\max _{f \in \Theta_{1}}\left\|\hspace{3pt}f-\int\displaylimits_{\Delta(S)} f^{\prime} d \xi_{1}\left(f^{\prime}\right)\right\|_{2}^{2},
\end{align*}
\noindent where the inequality comes from Jensen's inequality since $\|\cdot\|_{2}^{2}$ is convex. The function $\bar{f}:=\int_{\Delta(S)} f^{\prime} d \xi_{1}\left(f^{\prime}\right),$ satisfies $\bar{f} \in \Delta(S)$ and then, using the triangular inequality
\begin{align*}
    \varepsilon-\max _{f \in \Theta_{1}}\left\|f-\int_{\Delta(S)} f^{\prime} d \xi_{1}\left(f^{\prime}\right)\right\|_{2}^{2} &=\varepsilon-\max _{f \in \Theta_{1}}\|f-\bar{f}\|_{2}^{2} \\ & \leq \varepsilon-\frac{\left\|f_{x}-\bar{f}\right\|_{2}^{2}+\left\|f_{y}-\bar{f}\right\|_{2}^{2}}{2} \\ & \leq \varepsilon-\frac{\left\|f_{x}-f_{y}\right\|_{2}^{2}}{2}<0 
\end{align*}
so the uninformed expert never accepts the contract.
\end{proof}

\noindent \textbf{PROOF OF PROPOSITION \ref{prop2}}
\begin{proof}
Let $\Theta_{1}=B_{\varepsilon_{1}}\left(f_{1}\right)$ and $\Theta_{2}=B_{\varepsilon_{2}}\left(f_{2}\right)$ for a pair of
forecasts $f_{1}$ and $f_{2}$ in $\Delta(S) .$ Without loss of generality assume $\varepsilon_{2}>\varepsilon_{1}>0 .$ Let $B$ be the
Brier Score and let $\gamma$ be such that $\varepsilon_{2}^{2}>\gamma>\varepsilon_{1}^{2} .$ Define contracts $C_{1}$ and $C_{2}$ for experts 1 and 2
respectively as $C_{1}\left(f_{1}, f_{2}, s\right)=B\left(f_{1}, s\right)-B\left(f_{2}, s\right)+\gamma^{2}$ and $C_{2}\left(f_{2}, f_{1}, s\right)=B\left(f_{2}, s\right)-B\left(f_{1}, s\right)+\gamma^{2}$.

As before, provided that experts may use random generators of forecasts $\xi_{1}, \xi_{2} \in \Delta(\Delta(S)),$
expert $i$ will accept the contract if there exists $\xi_{i} \in \Delta(\Delta(S))$ such that

\begin{align*}
    \min _{f \in \Theta_{i} \atop \xi_{j} \in  \Delta \Delta(S)} \iint\displaylimits_{\scriptscriptstyle \Delta(S) \Delta(S)} \left(\left\|f^{\prime}-f\right\|+\left(\gamma-\left\|f-f^{*}\right\|_{2}^{2}\right)\right) \mathrm{d} \xi_{i}\left(f^{*}\right) \mathrm{d} \xi_{j}\left(f^{\prime}\right)>0
\end{align*}

It is simple to see that expert 1 accepts his contract, because he can get a positive payoff by making $\xi_{1}\left(\left\{f_{1}\right\}\right)=1 .$ Formally
\begin{align*}
    \min _{f \in \Theta_{i} \atop f_{2} \in \Delta(\Delta(S))}|| f^{\prime}-f||+\left(\gamma-\left\|f-f_{1}\right\|_{2}^{2}\right) d \xi_{2}\left(f^{\prime}\right) \geq \min _{f \in \Theta_{i}} \gamma-\left\|f-f_{1}\right\|_{2}^{2} \geq \gamma-\varepsilon_{1}^{2}>0,
\end{align*}
\noindent the first inequality coming from the fact the $\|\cdot\|$ is non-negative and the second from the fact that $\Theta_{1}=B_{\varepsilon_{1}}\left(f_{1}\right)$. Since there are two forecasts $f_{x}$ and $f_{y}$ in $\Theta_{2}$ such that
$\gamma<\varepsilon_{2}^{2} \leq \frac{\left\|f_{x}-f_{y}\right\|_{2}^{2}}{2}$, then expert 2 rejects the contract following the same argument as in the proof of Proposition \ref{prop1}.
\end{proof}
\bibliographystyle{plain}
\bibliography{biblio}
\end{document}